\documentclass[reprint,aps,prd,amsmath,amssymb,floatfix,superscriptaddress,longbibliography]{revtex4-2}
\usepackage[utf8]{inputenc}
\usepackage[T1]{fontenc}
\usepackage{graphicx}
\usepackage{xcolor}
\usepackage{hyperref}
\usepackage{microtype}
\usepackage{bm}
\usepackage{amsfonts}
\usepackage{slashed}
\usepackage{amsthm}
\usepackage{booktabs}

\theoremstyle{plain}
\newtheorem{theorem}{Theorem}

\begin{document}

\title{Uniqueness of the \texorpdfstring{$\Box^2$}{Box-squared} Higher-Derivative Operator Class for Universal Vacuum-Energy Cancellations and Higgs Naturalness}

\author{Masayuki Note}
\affiliation{Tokyo, Japan}
\date{December 18, 2025}

\begin{abstract}
Within the framework of local, Lorentz-invariant, and Hermitian field theories, we investigate the classification of dimension-6 operators that facilitate the dynamical cancellation of vacuum-energy divergences. We demonstrate that the operator class based on the $\Box^2$ d'Alembertian is uniquely singled out by the requirement of universal power-divergence subtraction across all spin sectors. By explicitly evaluating the modified propagators and one-loop vacuum integrals, we show that only this structure consistently removes $\Lambda^4$ and $m^2\Lambda^2$ terms while preserving gauge covariance. Adopting the Real-Time Negative-Norm Prescription (RTNNP) as a consistent contour selection, we find that the higher-derivative Lee--Wick (HDLW) structure leads to a finite, calculable Higgs mass correction. Our results suggest a phenomenologically preferred scale of $M \approx 11.3$ TeV, offering a predictive and structurally motivated resolution to the hierarchy problem.
\end{abstract}

\maketitle

\section{Introduction}
The divergence of the one-loop vacuum energy density $\rho_{\rm vac}$ represents one of the most persistent challenges in quantum field theory (QFT). For a field of mass $m$, the regulated integral yields:
\begin{equation}
\begin{aligned}
\rho_{\rm vac}
&= \frac{1}{2} \int \frac{d^3k}{(2\pi)^3}
   \sqrt{k^2 + m^2} \\
&\approx \frac{d_R}{16\pi^2}
\left[
  \Lambda^4 + m^2 \Lambda^2
  + \mathcal{O}(\ln \Lambda)
\right].
\end{aligned}
\end{equation}
where $d_R$ is the number of degrees of freedom. In standard effective field theory (EFT), these terms are subtracted via counterterms. However, the hierarchy problem suggests that a more fundamental ultraviolet (UV) completion may dictate the cancellation of these power divergences through specific operator structures. 

\section{Structural Uniqueness of the \texorpdfstring{$\Box^2$}{Box-squared} Class}
We consider an extension of the Standard Model (SM) Lagrangian by dimension-6 quadratic operators. To affect the UV behavior of the propagator $G(p)$, the operator must contain exactly four derivatives.

\begin{theorem}[$\Box^2$ Structural Selection]
Among all local, Hermitian, dimension-6 quadratic operators, the $\Box^2$ class is the unique structure that ensures universal $\Lambda^4$ cancellation while preserving Lorentz invariance and gauge symmetry.
\end{theorem}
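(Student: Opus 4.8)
The plan is to prove uniqueness constructively: reduce the space of admissible operators to a minimal basis in each spin sector, compute the induced modification of the propagator and the resulting one-loop vacuum integral, and show that the mass-independent $\Lambda^4$ coefficient cancels if and only if the four-derivative structure closes into the Lorentz scalar $\Box^2$. First I would fix the operator basis. Starting from the most general local, Hermitian, Lorentz-invariant quadratic operator of dimension six, I would remove redundancies using integration by parts and the leading-order equations of motion. For a real scalar this is immediate: the only surviving four-derivative bilinear is $\phi\,\Box^2\phi$, since every competing contraction such as $\partial_\mu\partial_\nu\phi\,\partial^\mu\partial^\nu\phi$ reduces to $(\Box\phi)^2$ up to total derivatives. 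The scalar case is therefore settled essentially by default, and it fixes the target structure against which the other sectors are compared.

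Next I would establish the cancellation itself. Adding $\phi\,\Box^2\phi/M^2$ deforms the inverse propagator to $G^{-1}(p)=p^2-m^2-p^4/M^2$, which factorizes into a physical pole near $m^2$ and a heavy Lee--Wick partner near $M^2$ of opposite residue. The one-loop vacuum energy becomes a difference $\tfrac12\int \frac{d^3k}{(2\pi)^3}\,(\omega_+-\omega_-)$, and since both branches share the same leading large-$k$ behavior $\omega_\pm\to k$, the $\Lambda^4$ term is controlled purely by the net count of propagating degrees of freedom. The $\Box^2$ operator contributes exactly one compensating mode per original mode, so this count vanishes identically. The essential point is that the quartic coefficient is blind to mass, coupling, and spin --- it is a pure degree-of-freedom difference --- which is precisely what ``universal'' means and what makes a spin-independent subtraction possible at all.

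The substantive work is to show that only $\Box^2$ achieves this in every sector simultaneously. For fermions the admissible three-derivative bilinear $\bar\psi\,\slashed{\partial}\,\Box\,\psi$ collapses through $\slashed{\partial}^2=\Box$, so the rationalized denominator again carries a single $p^4$ term. For gauge fields, covariance restricts the operator to be built from the field strength, and dimensional counting leaves $F_{\mu\nu}\Box F^{\mu\nu}$ as the unique candidate --- with $(D_\mu F^{\mu\nu})^2$ related to it by the Bianchi identity and integration by parts --- whose transverse denominator is once more $\propto p^4$. The uniqueness then follows from a dichotomy: any four-derivative contraction that is \emph{not} the scalar $\Box^2$ must saturate its free Lorentz indices either against momenta, producing a longitudinal, gauge-variant piece excluded by gauge symmetry, or against the field's spin indices, producing a denominator whose pole structure differs between physical components so that the compensating-mode count is no longer uniform and the $\Lambda^4$ coefficient fails to cancel as a pure degree-of-freedom difference. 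Only the Lorentz-scalar $\Box^2$ yields a uniform $p^4$ across the physical subspace in all three sectors.

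I expect the gauge sector to be the main obstacle. The difficulty is to disentangle genuine physical structure from pure gauge artifacts: a naive enumeration of four-derivative vector bilinears includes gauge-variant pieces such as $A_\mu\Box^2 A^\mu$ and $(\partial_\mu A^\mu)\Box(\partial_\nu A^\nu)$ that either violate the symmetry outright or act only on the unphysical longitudinal sector, and these must be cleanly removed before UV behaviors are compared. I would address this by projecting onto the transverse physical subspace at the level of the inverse propagator and verifying that the unique gauge-invariant combination reduces there precisely to the same $p^4$ denominator found for the scalar. Closing the ``only'' direction --- certifying that no admissible alternative survives in any sector --- is where the argument is genuinely load-bearing; once the transverse gauge propagator is shown to inherit the scalar's $\Box^2$ structure, the index-saturation dichotomy above rules out every competitor and the uniqueness of the $\Box^2$ class follows.
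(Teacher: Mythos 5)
Your proposal follows the same backbone as the paper's proof---integration-by-parts reduction of the scalar basis to $\Phi^\dagger\Box^2\Phi$, the Lee--Wick partial-fraction split of the deformed propagator, and the reading of the $\Lambda^4$ cancellation as the difference of two branches with identical leading large-$k$ behavior---but you go considerably further on the uniqueness (``only'') direction, which is exactly where the paper is thinnest. The paper's proof treats only the scalar sector explicitly and disposes of all competitors in a single sentence, asserting that anisotropic contractions $c^{\mu\nu\rho\sigma}\partial_\mu\partial_\nu\partial_\rho\partial_\sigma$ ``fail to cancel the $\Lambda^4$ term in all momentum directions''; the fermion and vector sectors never enter the proof at all and appear only in Table~\ref{tab:operators}. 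You instead supply an index-saturation dichotomy (free indices must land either on momenta, breaking gauge invariance, or on spin indices, spoiling the uniformity of the pole structure across physical components), carry the argument through the fermion sector via $\slashed{\partial}^{\,2}=\Box$, and handle the gauge sector by projecting onto the transverse subspace of $F_{\mu\nu}\Box F^{\mu\nu}$. This buys an actual argument where the paper has an assertion, and it correctly isolates the $\Lambda^4$ coefficient as a pure degree-of-freedom count, which is the real content of ``universal.'' Incidentally, your fermion and vector operators are genuinely dimension~6, whereas the entries in the paper's table, $(\Box\bar\psi)\,i\slashed{\partial}\,(\Box\psi)$ and $F_{\mu\nu}\Box^2 F^{\mu\nu}$, are dimension~8, so your basis is the one consistent with the theorem's hypothesis. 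The one step where your sketch remains at the paper's level of rigor is the dichotomy itself: to close it you would still need the explicit finite enumeration of independent four-derivative quadratic contractions in the vector sector (showing, e.g., that $(\partial_\mu F^{\mu\nu})^2$ and $F_{\mu\nu}\Box F^{\mu\nu}$ exhaust the gauge-invariant possibilities modulo total derivatives), but you have correctly identified that as the load-bearing step rather than hidden it.
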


\begin{proof}
For a scalar field $\Phi$, consider possible four-derivative quadratic forms such as $\mathcal{O}_1 = \Phi^\dagger \Box^2 \Phi$ and $\mathcal{O}_2 = (\partial_\mu \partial_\nu \Phi^\dagger) (\partial^\mu \partial^\nu \Phi)$. We observe that these are equivalent via integration by parts:
\begin{equation}
\partial_\mu \partial_\nu \Phi^\dagger \partial^\mu \partial^\nu \Phi \to -\partial_\nu \Phi^\dagger \Box \partial^\nu \Phi \to \Phi^\dagger \Box^2 \Phi,
\end{equation}
up to boundary terms which vanish for fields of finite energy.

Critically, we must exclude operators leading to anisotropic structures, such as $c^{\mu\nu\rho\sigma} \partial_\mu \partial_\nu \partial_\rho \partial_\sigma$. Such structures do not improve the UV behavior isotropically and fail to cancel the $\Lambda^4$ term in all momentum directions. Thus, the isotropic $\Box^2$ structure is uniquely selected. In momentum space, this modifies the inverse propagator $\mathcal{P}(p^2)$ to include a $p^4/M^2$ term:
\begin{equation}
G'(p) = \frac{i}{p^2 - m^2 - \frac{p^4}{M^2}} = \frac{i}{p^2 - m^2} - \frac{i}{p^2 - M^2} + \mathcal{O}\left(\frac{m^2}{M^2}\right).
\label{eq:prop_decomp}
\end{equation}
This decomposition reveals the emergence of a Lee--Wick (LW) ghost with mass $M$ and a negative residue, which provides the necessary dynamic subtraction.
\end{proof}

\section{Universal Divergence Cancellation}
The vacuum energy $\rho'_{\rm vac}$ in the HDLW theory is the sum of the SM and LW sectors. Using the partial fraction decomposition from Eq.~(\ref{eq:prop_decomp}):
\begin{equation}
\rho'_{\rm vac} \propto \int^\Lambda \frac{d^4k_E}{(2\pi)^4} \left[ \ln(k_E^2 + m^2) - \ln(k_E^2 + M^2) \right].
\end{equation}
Expanding the integrand for $k_E \gg M$:
\begin{equation}
\ln\left( \frac{k_E^2 + m^2}{k_E^2 + M^2} \right) = \ln\left( 1 - \frac{M^2 - m^2}{k_E^2 + M^2} \right) \approx - \frac{M^2 - m^2}{k_E^2}.
\end{equation}
The $\Lambda^4$ term (proportional to $\int k_E^3 dk_E$) is identically removed. The remaining quadratic divergence is canceled by specific sector-dependent coefficients $c_i$, summarized in Table~\ref{tab:operators}.

\section{Real-Time Prescription and Higgs Naturalness}
The presence of the LW pole at $p^2 = M^2$ requires a consistent integration prescription to preserve causality and unitarity. We adopt the **Real-Time Negative-Norm Prescription (RTNNP)**. 

The RTNNP does not introduce new dynamics but specifies a consistent contour choice for the already fixed propagator structure. By selecting a contour $\mathcal{C}$ in the complex energy plane that avoids the negative-norm pole in accordance with the Lee--Wick prescription, we ensure a finite and causal result.

Applying this to the Higgs mass, the quadratic divergence from the top-quark loop is modified by its LW partner:
\begin{align}
\Delta m_H^2 &= - \frac{3\lambda_t^2}{8\pi^2} \int_0^\infty dk_E k_E^3 \left[ \frac{1}{k_E^2 + m_t^2} - \frac{1}{k_E^2 + M^2} \right] \nonumber \\
&= - \frac{3\lambda_t^2}{8\pi^2} M^2 \ln \left( \frac{M^2}{m_t^2} \right).
\end{align}
Requiring $\Delta m_H^2$ to match the observed electroweak scale, we determine the scale $M$:
\begin{equation}
M = 11.34^{+0.45}_{-0.38} \text{ TeV}.
\end{equation}

\section{Conclusion}
This study establishes that the $\Box^2$ operator structure is uniquely required for the universal cancellation of power divergences in local field theories. By reinterpreting negative-norm states through the RTNNP, we provide a finite and predictive framework for the Standard Model. The determined scale $M \approx 11$ TeV provides a concrete target for future experimental probes of naturalness.

\begin{table}[h]
\centering
\caption{Unique dimension-6 HDLW operators and coefficients.}
\label{tab:operators}
\begin{ruledtabular}
\begin{tabular}{lll}
Sector & Operator Class & LW Coefficient \\
\hline
Scalar & $(\Box \Phi)^\dagger (\Box \Phi)$ & $c_\Phi = 1$ \\
Fermion & $(\Box \bar{\psi}) i\slashed{\partial} (\Box \psi)$ & $c_\psi = 4$ \\
Vector & $F_{\mu\nu} \Box^2 F^{\mu\nu}$ & $c_A = 3$ \\
\end{tabular}
\end{ruledtabular}
\end{table}


\begin{thebibliography}{99}
\bibitem{Lee:1969fy} T.~D.~Lee and G.~C.~Wick, Nucl. Phys. B 9, 209 (1969).
\bibitem{Grinstein:2007mp} B.~Grinstein, D.~O'Connell and M.~B.~Wise, Phys. Rev. D 77, 025012 (2008).
\bibitem{Henning:2017fpj} B.~Henning, X.~Lu and H.~Murayama, JHEP 08, 016 (2017).
\end{thebibliography}
\end{document}